\newtheorem{theorem}{Theorem}
\newtheorem{lemma}[theorem]{Lemma}
\newcommand{\out}{\mathrm{out}}
\newcommand{\expose}{\mathrm{expose}}
\begin{document}

\title[Deviator Detection under Imperfect Monitoring]{Deviator Detection 
under Imperfect Monitoring \\[1ex] 
\footnotesize\mdseries
Extended Abstract}

\author[D. Berwanger]{Dietmar Berwanger}
\address{{\smaller CNRS and ENS Paris-Saclay, France}}  
\email{dwb@lsv.fr (D. Berwanger)}
\author[R. Ramanujam]{R. Ramanujam}
\address{The Insitute of Mathematical Sciences, India}
\email{jam@imsc.res.in (R. Ramanujam)}

\maketitle

\begin{abstract}
Grim-trigger strategies are a fundamental mechanism 
for sustaining equilibria in iterated games: 
The players cooperate along an agreed path, and as soon as one
player deviates, the others form a coalition to play him down to his
minmax level. A precondition to triggering such a strategy is that the
identity of the deviating player becomes common knowledge among the
other players. This can be difficult or impossible to attain in games
where the information structure allows only imperfect monitoring of
the played actions or of the global state.   

We study the problem of synthesising finite-state strategies for
detecting the deviator from an agreed strategy profile in games played
on finite graphs with different information structures. We show that
the problem is undecidable in the general case where the global state
cannot be monitored.
On the other hand, we prove that under perfect monitoring of the
global state and  
imperfect monitoring of actions, the problem becomes
decidable, and we present an effective synthesis procedure 
that covers infinitely repeated games with private monitoring. 
\end{abstract}

\section{Introduction}
In social situations, a {\em queue} acts in a self-stabilising manner: when
anyone tries to jump the queue, the others give him a dirty look, and often
this suffices to enforce the rule. Distributed protocols are often designed
to be self-stabilising in a similar sense: when a fault occurs, it is detected
and isolated, and the system recovers to a state in such a way that the
computation can proceed.

In general, the design of such systems assumes that all processes cooperate 
and coordinate their actions towards achieving the system goals. When a process
deviates from the protocol (perhaps due to faults), the system needs to detect
it and recover from the situation. Usually, if such deviation occurs in an
isolated one-off manner, it may be hard to detect, but when they occur
repeatedly, there is the possibility of detecting the culprit(s). This idea has 
been applied to intruder detection in security theory.

There are several interesting variations on this theme. One is to ask for 
protocols or solutions that do not demand detection of deviators but to
merely provide resilience. Typical solutions in distributed computing
assume a bound on the number of faulty processes and provide solutions that
can tolerate that many failures. Another consideration relates to the 
observational limitations of processes. In a distributed system, processes
have only a partial view of the system state, and often cannot observe all
the moves of other processes, and these may help the deviator to evade
detection successfully.

When distributed computing meets game theory, we have further interesting
possibilities to consider. 
A player may act selfishly if it maximises her payoff, 
even if this means a deviation.  
In general the coalition may have no way to hold 
a member of the coalition to act according to prior commitments 
(dismissively labelled ``cheap talk''). 
However, in the case of repeated play, there can be threats and 
punishment to ensure that members do not deviate. In this context, a variety of 
mechanisms are studied in game theory, notable among them the {\em grim-trigger 
strategies}: start out co-operating, and when any player deviates, punish for 
ever after. The importance of such strategies is that they play a central role 
in what are referred to as Folk theorems in game theory.

Nash equilibria provide a robust way to predict how rational players would act 
offering their best responses to their beliefs about how others might act, in 
situations where players differ in their knowledge of what others might
be doing. At Nash equilibrium, no player has an incentive to deviate unilaterally 
and shift to a different strategy. 
Folk theorems then assert that for any Nash 
equilibrium outcome vector $(o_1, \ldots, o_n)$ in an $n$-player infinitely 
repeated game (with, say, limit average rewards), then each player $i$ can 
force the outcome $o_i$. Conversely every ``feasible'' and enforceable outcome 
is the outcome of some Nash equilibrium.

Interestingly, these (and many related) results depend on the ability of players
to perfectly monitor the actions of the deviator. In cases where players' 
actions
may not be directly observable, special solutions are needed, and game theorists
have developed a powerful set of techniques for many subclasses of games~\cite{FudenbergLM94,KandoriHit98,Amarante03,MailathSam06}.
Note that this situation pertains more directly to distributed systems
where processes are limited in their ability to observe the global state and to
record other processes' actions.

In this paper, we focus on computational questions related to deviator detection 
under imperfect monitoring. Can we have algorithms that determine, in a game, whether 
the deviator's identity be rendered common knowledge ? If yes, we could like to 
construct strategies (for the other players) to achieve this.  This problem is 
related to solving consensus problems on graphs with different communication topologies.

We propose a general technique based on methods from automated synthesis
towards achieving an epistemic objective (involving common knowledge). With
imperfect information, the synthesis problem is undecidable already with simple
reachability objectives. So one expects only negative results for deviator detection 
as well. And indeed, deviator detection under imperfect information of game
state is in general undecidable.

However, it is interesting to note that the essence of the deviator detection
problem lies in monitoring of player actions and not necessarily game state.
Indeed, in repeated games we often have no states at all ! We show in this
paper that in such as case, the problem is tractable. The main idea is that the
problem can then be studied in the setting of coordination games of incomplete
information with finitely many states of nature. These are systems with perfect
monitoring of state, and uncertainty for a player comes only from unobserved
actions of other players. But then, these are games with bounded initial uncertainty
that can only reduce as play progresses, and this observation leads us to an
algorithmic solution.

While the main result of the paper is the assertion that deviator detection
is decidable under imperfect monitoring of actions (with only a finite amount of
uncertainty about the state), we see the contribution of the paper as twofold: to
highlight a setting in game theory that is of interest to distributed computing;
and to illustrate the use of epistemic objectives, that is of interest to games
as well as distributed systems. We also suggest that methods from automated
synthesis may offer new ways of describing sets of equilibrium solutions and for
constructing equilibria, which could be of technical interest for bridging game
theory and distributed systems~\cite{halpern-csgt}.

\section{Games}
We model distributed systems as infinite games with finitely many states. 
There is a finite set $I$ of players. 
We refer to a list $x=(x^i)_{i \in I}$ with one element $x^i$ for every
player~$i$ as a \emph{profile}.  
For any such profile, we write $x^{-i}$ to denote the 
list $( x^j )_{i \in I , j \neq i}$ where the component
of Player~$i$ is omitted; 
for element $x^i$ and a list $x^{-i}$, we denote by
$(x^i, x^{-i} )$ the full profile $(x^i)_{i \in I}$. 

\paragraph{Game structure.}
To describe the game dynamics, we fix, for each player~$i \in I$, a~set~$A^i$ of
\emph{actions}, a set~$B^i$ of \emph{observations}, and a set $V^i$ of
\emph{local states}\,---\,these are finite sets. We denote by $A$, $B$, and
$V$, the set of all profiles of actions, observations, and local
states; a profile of local states is also called \emph{global state}.
Now, the game form is described by its transition function 
$\gamma: V \times A \to V \times B$. 

The game is played in stages over infinitely many periods starting from a 
designated initial state $v_0 \in V$ 
known to all players.  In each period~$t \ge 1$, starting in a state~$v_{t-1}$, 
every player~$i$ chooses an action $a^i_t$.  Then the transition 
$\gamma( v_{t-1}, a_t) = (b_t, v_t)$ determines the observation $b_t^i$ received 
privately by each player~$i$, and the  the global successor state $v_t$, from which
the play proceeds to period $t+1$.

Thus, a~\emph{play} is an infinite sequence $\pi = v_0 a_1 v_1 a_2 v_2 
\dots\in V(AV)^\omega$ following the transitions $\gamma(v_{t-1}, a_t) = 
(v_t, b_t)$ for all $t \ge 1$.  A \emph{history} is a finite prefix 
$v_0 a_1 v_1 ~\dots a_t v_t \in V (AV)^*$ of a play.  We refer to the number 
of stages played up to period~$t$ as the \emph{length} of the history.
The sequence of observations $b_1^i b_2^i \dots b_t^i$ received by
player~$i$ along a history $\pi$ is denoted by $\beta^i( \pi )$.
We assume that each player~$i$ always knows his local state $v^i$
and the action $a^i$ she is playing, that is, these data are included 
in her private observation received in each round.  However, she is not 
perfectly informed about the local states or the actions of the other players, 
therefore we speak of \emph{imperfect monitoring}. 
The monitoring function~$\beta^i$ induces an \emph{indistinguishability} 
relation between histories and plays: $\pi \sim^i \pi'$ if, and only if, 
$\beta^i( \pi ) = \beta^i (\pi')$. This is an equivalence relation 
between game histories; its classes are called the 
\emph{information sets} of player~$i$.  

A \emph{strategy} for player~$i$ is a mapping $s^i:(B^i)^* \to A^i$
that prescribes an action for every observation sequence. 
Again, we denote by $S$ the set of all strategy profiles.
We say that a history or a play $\pi$ \emph{follows} a strategy~$s^i$, if 
$a^i_{t+1} = s^i( \beta^i( \pi_t ))$, for all histories $\pi_t$ of 
length $t \ge 0$ in $\pi$.  Likewise, a history or play follows a profile 
$s \in S$, if it follows the strategy $s^i$ of each player~$i$.  
The \emph{outcome} $\mathrm{out}( s )$ of a strategy profile~$s$
is the unique play that follows it. 
 
With the above definition of a strategy, we implicitly assume that players have
perfect recall, that is, they may record all the information acquired
along a play. Nevertheless, in certain cases, we can restrict our
attention to strategy functions computable by automata with finite memory. 
In this case, we speak of \emph{finite-state strategies}.

\paragraph{Strategy synthesis.}
The task of automated synthesis is to construct finite-state strategies for 
solving games (presented in a finite way).
Depending on the purpose of the model, the notion of solving has different 
meanings.

One prominent application area in distributed systems is concerned with
synthesising coordination strategies for a coalition with 
common interests against a fixed adversary\,---\,the environment, 
or Nature~\cite{KupfermanVar01,PnueliRos90,FinkbeinerSch05}.
For this purpose, we assume
the coalition to be the set $I \setminus \{0\}$ excluding a designated player~$0$.
We are interested in win/lose games. The {\em winning condition}
is described as a set $W$ of plays; a basic example are 
\emph{reachability} winning conditions, which consist of all plays that reach a designated 
set of global states.
Here, a solution is a distributed winning strategy for the coalition: 
a profile $s^{-0}$ such that $\out( s^{-0}, r^0 ) \in W$ for 
all $r^0 \in S^0$.

The \emph{distributed synthesis problem} for coordination strategies 
asks: for a given game, determine whether there 
exists a profile of finite-state strategies for $I \setminus \{0\}$ 
that is winning against player~$0$. 
This problem is well known to be undecidable, 
already for games with reachability winning conditions.

\begin{theorem}[\cite{PetersonRei79}]\label{thm:coordinationundecidabilty}
  The distributed synthesis problem is undecidable for 
  reachability games for two players with imperfect information against an adversary.
\end{theorem}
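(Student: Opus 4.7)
The plan is to reduce from the halting problem for a deterministic Turing machine~$M$ on empty input. Given $M$, we construct a reachability game $G_M$ with two team players (players~$1$ and~$2$) against an adversary (player~$0$) such that the team admits a profile of finite-state winning strategies iff $M$ halts. Since the halting problem is undecidable, this yields Theorem~\ref{thm:coordinationundecidabilty}.

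The construction exploits a deliberate information asymmetry between the two team players. The game proceeds in rounds. In each round, the adversary nondeterministically selects a time stamp $t$ and a tape position $p$, disclosed bit by bit to player~$1$ (for~$t$ only) and to player~$2$ (for~$p$ only). Each team player must then declare the tape symbol at cell $p$ at time $t$, along with the associated head position and control state of~$M$. The transition function of $G_M$ checks these declarations against $M$'s local consistency rules: the symbol at $(p, t{+}1)$ must follow from the symbols at $(p{-}1, t)$, $(p, t)$, $(p{+}1, t)$ together with the state and head data. Any inconsistency routes the play to a dead-end sink from which the goal state is unreachable; declaring an accepting configuration of~$M$ triggers the reachability goal.

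The correctness argument rests on the following observation: since player~$1$ sees only $t$ and player~$2$ only $p$, their strategies are essentially functions $f_1(t)$ and $f_2(p)$. For every possible adversarial challenge $(t,p)$ to be answered consistently, the team must commit, through $f_1$ and $f_2$, to a single $(t,p)$-indexed table of symbols realising a valid computation of $M$ that culminates in acceptance. Such a table exists and is finitely describable by an automaton iff $M$ has a halting run on empty input\,---\,a halting run touches only finitely many non-blank cells over finitely many steps, so $f_1$ and $f_2$ are finite-state; conversely, if $M$ does not halt, no response function can simultaneously satisfy all the adversary's checks.

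The main obstacle is calibrating the adversary's challenge protocol so that (a)~finite-state strategies exist precisely when $M$ halts, not merely when some infinite behaviour of~$M$ happens to be eventually periodic, and (b)~the information asymmetry cannot be circumvented by the team through the publicly visible history of moves. The standard remedy is to intersperse genuine challenges with dummy rounds in which only $t$ or only $p$ is publicly revealed, so that no team player can infer the other's private observation from public data, and to let the adversary stall the goal-reaching opportunity arbitrarily long so that the team must survive unboundedly many probes before reaching the target. This ensures that finitely many non-blank tape cells, rather than any infinite periodic table, are the only consistent witnesses.
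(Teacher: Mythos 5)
There is a genuine gap in the core of your reduction: the information split you propose makes the players' declarations meaningless. If player~$1$ observes only the time stamp $t$ and player~$2$ observes only the tape position $p$, then, as you yourself note, their responses are functions $f_1(t)$ and $f_2(p)$ of a \emph{single} coordinate each. Neither player can ``declare the tape symbol at cell $p$ at time $t$,'' because neither knows both coordinates; and a two-dimensional table $T(p,t)$ recording a Turing-machine computation is not recoverable from two one-dimensional functions (the space-time diagram of a computation is not a product object). Consequently the local consistency check you describe\,---\,relating the value at $(p,t{+}1)$ to the values at $(p{-}1,t)$, $(p,t)$, $(p{+}1,t)$\,---\,has no data to operate on: four cell values at four distinct coordinate pairs would be needed in a single round, and no player in your protocol ever produces any of them. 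The claim that ``the team must commit, through $f_1$ and $f_2$, to a single $(t,p)$-indexed table'' is exactly the step that fails.

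The classical Peterson--Reif-style reduction (also used by Pnueli and Rosner) arranges the asymmetry differently: \emph{both} team players are required to output, letter by letter, the entire serialized sequence of configurations $C_0 \# C_1 \# C_2 \cdots$ of $M$, so each player individually carries the whole two-dimensional table. The adversary's hidden move is a private signal to one player to skip (or delay by) one configuration, after which the environment compares the two streams in a sliding window; since the signalled player is one configuration ahead, this checks that $C_{i+1}$ is the legal successor of $C_i$. Because the signal is private, neither player can tell whether it is being cross-checked, so both must emit the genuine computation, and the target is reached exactly when the halting configuration appears. Your points (a) and (b) are handled for free in that construction\,---\,if $M$ halts, the computation is finite and the winning strategies are finite-state; if $M$ does not halt, no strategies of any kind win\,---\,but the construction itself must give each team player enough information to produce the full computation, which your coordinate split does not.
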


If we consider non-zero sum games,
where we have players with different, possibly overlapping
objectives, a standard solution concept is that of
Nash equilibrium. A theory of synthesis for such games
is being developed in the last decade {\cite{GutierrezWoo14,ChatterjeeEtAl14,BBMU15}}.
While there are several positive results for
games of perfect information, synthesis questions are
difficult in the context of imperfect information. 

In this paper, we are interested in a specific case that lies between 
the approaches of distributed coordination and Nash equilibrium:  
synthesis 
of strategies for detecting an unknown adversary\,---if he should arise---\, 
under conditions of imperfect monitoring. 
We now proceed to define and address this problem.

\section{Deviator detection}

Unlike the traditional setting concerned with \emph{temporal}
objectives on actions or states assumed in a play, we are interested here in 
\emph{epistemic} objectives which refer to attaining knowledge 
that a certain event has occurred.
For an introduction to knowledge in distributed systems, 
we refer to the book of 
Fagin, Halpern, Moses, and Vardi~\cite[Ch. 2]{FaginHMV95}.

Let us fix a game~$\gamma$ with the usual notation.
An \emph{event} is a subset~$E$ of 
histories in~$\gamma$. The event~$E$
\emph{occurs} at history $\pi$ if $\pi \in E$. 
We say that (the occurrence of) $E$ is private knowledge 
of Player~$i$ at history $\pi$, if 
for any $\pi' \sim^i \pi$, it holds that $\pi' \in E$.
Further, an event $E \subseteq \Omega$ is \emph{common knowledge} among the players
of a coalition~$C \subseteq I$ at history $\pi$ 
if, for every sequence of histories $\pi_1, \dots, \pi_k$ and players 
$i_1, \dots, i_k \in C$ such that 
$\pi \sim^{i_1} \pi_1 \sim^{i_2} \dots \sim^{i_k} \pi_k$,
it is the case that $\pi_k \in E$. 

Specifically, we are interested in the event that a player~$i$ 
has deviated from a given play.
To describe this, we define for each play~$\pi$ and for every player~$i \in I$,
the event $D^i( \pi )$
consisting of all histories $\tau$ that disagree with $\pi$ such that,
for the first round $t$ where the prefixes $\pi_t$
and $\tau_t$ disagree, they differ only in the action of player~$i$.
(Since the transition
function is deterministic and the initial state is fixed, the first difference
between two histories can only occur at an action profile). 
Obviously, the sets $D^i( \pi )$ are suffix closed, that is, for each
$\tau \in D^i( \pi )$, any prolongation history $\tau'$ belongs to
$D^i( \pi)$ as well. Likewise, if a coalition 
attains common knowledge of $D^i( \pi )$ at a history $\tau$, it
attains common knowledge of $D^i( \pi )$ at every prolongation
history of $\tau$. 

Now, let $W$ be a designated set of plays in $\gamma$. 
A \emph{deviator detection} strategy with respect to~$W$ is a strategy
profile $s \in S$ such that:
\begin{enumerate}[(i)]
\item the outcome $\pi = \out( s )$ belongs to $W$, and 
\item for each player~$i$ and every strategy $r^i \in S^i$, 
if the outcome $\pi' = \out( s^{-i}, r^i )$ disagrees with $\pi$, 
then the coalition $I \setminus \{ i\}$ attains common knowledge of $D^i ( \pi )$ at some
history of $\pi'$.
\end{enumerate}

The synthesis problem for deviator detection is the following:
given a game~$\gamma$ with a target set $W$ specified by a finite-state
automaton, decide whether there exists a finite-state strategy profile for
deviator detection with respect to~$W$ and, if so, construct one.

\subsection{Deviator detection as a coordination problem}\label{ssec:detection-coordination}

Alternatively, we can cast the deviator detection problem 
as a more standard problem of distributed synthesis with temporal objectives.
Informally, this is done by adding a new player\,---Nature---\,
that can either remain silent, or at any point take over 
the identity of an actual player, deviate from his intended action, 
and continue playing on his behalf. 
The deviation of Nature takes the game into a fresh copy associated to the corrupted player~$i$ 
where the only way to win for the remaining coalition 
is by issuing a simultaneous action 
in which they all expose the identity of~$i$; however, 
if the exposure action is not taken in 
consensus by all players, except for~$i$, the game is lost.
  
More precisely, we transform the deviator detection game~$\gamma$ 
into a coordination game~$\hat{\gamma}$ against Nature\,---let us call it exposure game---\,as
follows: First, we add for each player~$i$, actions~$j \neq i$ that allow to
expose a deviation by player~$j$, that is, 
we set $\hat{A}^i := A^i \cup \{I \setminus i\}$; we use the shorthand $\expose^i$ 
to denote any action profile where the coalition $I \setminus \{ i \}$ chooses~$i$ in consensus.
Further, we involve a new player $N$ with actions
that allow him to stay silent (by choosing~$\varepsilon$) 
or to corrupt the action of any other player, that is
$A^N := \{ \varepsilon \} \cup \cup_{i \in I} A^i$. His local states are
$V^N := \{\varepsilon\} \cup I$. Moreover, we
include (global) sink states $\oplus$ and $\ominus$. 
The observation sets remain unchanged.
The transitions of the new game~$\hat{\gamma}$ follow the original 
transitions as long as Nature stays silent: $\hat{\gamma}( (v, \varepsilon), (a, \varepsilon) ) = ((v',
\varepsilon), b)$ for $(v', b) := \gamma(v, a )$. 
When Nature decides to deviate from the intended 
action of a player~$i$, his local state changes from $\varepsilon$ to $i$: 
$\hat{\gamma}( (v, \varepsilon), (a, c^i) ) = ((v',
i), b)$ for $c^i \neq a^i$ and $(v', b) := \gamma(v, (a^{-i}, c^i) )$; 
for the transitions in the game copy where player~$i$ is corrupted, we set  
$\hat{\gamma}( (v, i), (a, c^i) ) = ((v', i),
(a, c^i) )$ for $(v', b) := \gamma(v, (a^{-i}, c^i) )$, and 
$\hat{\gamma}( (v, i), (\expose^i, c^i) ) = \oplus$; all other moves involving exposure actions 
lead to $\ominus$. The (temporal) winning condition~$\hat{W}$ of the new game consists of the plays 
in $W$ where Nature remained silent and of the plays that reach $\oplus$. 

We can analyse the exposure game in terms of the Knowledge of Preconditions principle formulated by Moses in~\cite{Moses16}. 
Once a deviation occurred, the coalition can win only by reaching $\oplus$ via a simultaneous consensus action which requires common knowledge 
of the identity of the deviator. 
Conversely, deviator detection strategies in $\gamma$ 
can be readily used to win the exposure game. 

\begin{lemma}
Every coordination strategy for the coalition~$I$ in the game $(\hat{\gamma}, \hat{W})$ 
against~$N$ corresponds to a deviator-detection strategy with respect to $W$ 
in $\gamma$, and vice versa.
\end{lemma}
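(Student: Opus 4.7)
The plan is to verify the correspondence in both directions, treating Nature's silent branch in $\hat{\gamma}$ as a canonical embedding of the original game $\gamma$ and invoking the Knowledge of Preconditions principle of Moses as the bridge between simultaneous exposure actions and common knowledge of the deviator's identity.

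For the forward direction, I start from a deviator-detection profile $s$ in $\gamma$ with outcome $\pi \in W$ and define a profile $\hat{s}$ in $\hat{\gamma}$ as follows: every player continues to play $s^i$ on the corresponding $\gamma$-history as long as her observations are consistent with Nature being silent throughout; at the first round at which the coalition $I \setminus \{j\}$ attains common knowledge that some player~$j$ has deviated from $\pi$, every non-$j$ player simultaneously issues the exposure action~$j$. Since common knowledge entails individual knowledge and is attained uniformly across the coalition, each member can detect the trigger locally. Against a silent $N$ the outcome is $\pi \in W \subseteq \hat{W}$; against any deviation of $N$ corrupting a player~$j$, the induced play of $\gamma$ is a deviation play covered by condition~(ii) of deviator detection, so the required common knowledge is eventually attained and the simultaneous $\expose^j$ drives the play to $\oplus \in \hat{W}$.

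For the reverse direction, I start from a coordination profile $\hat{s}$ winning $(\hat{\gamma}, \hat{W})$ against $N$ and define $s^i$ on each $\gamma$-observation sequence by applying $\hat{s}^i$ to the matching observation sequence of $\hat{\gamma}$ under silent Nature. Condition~(i) follows because the outcome of $s$ projects to the silent-Nature outcome of $\hat{s}$, which lies in $\hat{W}$ and hence in $W$. For condition~(ii), fix a deviator~$i$ and a strategy $r^i$ in $\gamma$; the resulting play $\pi' = \out(s^{-i}, r^i)$ is indistinguishable, from every non-$i$ player's perspective, from a play in $\hat{\gamma}$ in which Nature takes the identity of~$i$ at the first disagreeing round and mimics $r^i$ afterwards. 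Since $\hat{s}$ wins, that play must reach $\oplus$, which is possible only through a simultaneous $\expose^i$ action by the coalition $I \setminus \{i\}$. By the Knowledge of Preconditions principle, such a simultaneous joint action executed on the basis of purely local observations can occur only if its precondition\,---\,namely, that the history belongs to $D^i(\pi)$\,---\,is common knowledge among $I \setminus \{i\}$, which is exactly what condition~(ii) demands.

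The principal obstacle lies in the reverse direction: one must argue rigorously that the simultaneity of $\expose^i$ in a winning profile forces common knowledge of the deviator's identity, and this is precisely where Moses's principle enters. The forward direction, by contrast, is essentially a bookkeeping translation once one observes that common knowledge within the coalition is a symmetric event on which all members can act concurrently without any additional communication; the preservation of the finite-state property in both translations is routine, since neither construction adds unbounded memory beyond the finitely many exposure branches indexed by~$I$.
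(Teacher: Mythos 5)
Your argument matches the paper's intended proof: the paper offers only the two-sentence sketch preceding the lemma, namely that once a deviation occurs the coalition can reach $\oplus$ only via a simultaneous consensus action, which by Moses's Knowledge of Preconditions principle requires common knowledge of the deviator's identity, and conversely that a deviator-detection strategy readily supplies the exposure trigger. Your elaboration of both directions\,---\,including the key observation that attainment of common knowledge is itself a simultaneous, locally detectable event on which the coalition can act in consensus\,---\,is correct and follows essentially the same approach.
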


\subsection{General undecidability}

The translation of deviator detection games into coordination games 
shows the problem under a different angle, 
but it does not bring us closer to an algorithmic solution. 
In the general setting of imperfect monitoring, we obtain coordination games between multiple players
with imperfect information, for which the synthesis problem is undecidable,
as pointed out in  Theorem~\ref{thm:coordinationundecidabilty}.

Indeed, it turns out that under imperfect monitoring, detecting deviators is no
easier than coordinating against an opponent to reach a target set.

\begin{theorem}
The synthesis problem for deviator detection strategies is undecidable for games with 
imperfect monitoring. 
\end{theorem}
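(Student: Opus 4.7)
The plan is to reduce from the distributed synthesis problem for coordination games under imperfect information, whose undecidability is recorded as Theorem~\ref{thm:coordinationundecidabilty}. Given an instance $(G,T)$ with coalition $\{1,2\}$ facing an adversary indexed~$0$ and reachability target~$T$, I will construct a three-player deviator-detection game~$\gamma$ on players $\{0,1,2\}$ together with a regular target set~$W$, such that $(G,T)$ admits a winning distributed strategy for the coalition if and only if $(\gamma,W)$ admits a deviator-detection strategy profile.

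The game $\gamma$ opens in an idle phase in which every player has a distinguished silent action, and the agreed profile $s$ has everyone remain silent forever, producing an outcome $\pi$ that is placed in~$W$. Transitions and observations are arranged so that (a) any unilateral deviation by player~$1$ or player~$2$ produces, at the very next step, a distinctive observation to every other player identifying the culprit, so that common knowledge of $D^1(\pi)$ or $D^2(\pi)$ is trivially attained; and (b) the first non-silent action of player~$0$, while players~$1,2$ are still silent, launches a simulation phase whose dynamics faithfully reproduce~$G$, with player~$0$ cast as the adversary and players~$1,2$ cast as the coalition. The candidate strategies $s^1,s^2$ are programmed to remain silent while they see only idle-phase observations and to switch to a pre-agreed $G$-coalition strategy as soon as they see any simulation-phase observation.

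From any winning distributed strategy for $(G,T)$ one straightforwardly extracts such $s^1,s^2$ in $\gamma$ that force reaching~$T$ on every simulation play, which gives the easy direction of the reduction. The hard direction relies on the design of the observation function inside the simulation phase: common knowledge of $D^0(\pi)$ between players~$1$ and~$2$ must coincide with reaching~$T$ in the simulation, and be impossible before. A concrete way to enforce this is to pair each pre-$T$ simulation observation of player~$1$ (resp.~player~$2$) with an idle-phase observation of the same player that is itself consistent with a previously detectable deviation of the other coalition member. This creates, via alternating chains of $\sim^1$ and $\sim^2$ indistinguishabilities, a path from every pre-$T$ history back to some non-deviation history in~$\pi$, blocking common knowledge of $D^0(\pi)$. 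Only upon reaching $T$ does a distinguished alarm observation arrive simultaneously at both coalition members and break the chain, much as the consensus exposure action does in the exposure game of the preceding lemma.

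The main obstacle is precisely this observation-design step: calibrating the simulation-phase observations so that the epistemic event \emph{``the coalition $\{1,2\}$ has common knowledge of $D^0(\pi)$''} in~$\gamma$ is equivalent to the temporal event \emph{``the simulation reaches~$T$''} in~$G$, while simultaneously keeping deviations by players~$1$ and~$2$ trivially detectable and preserving the information structure of~$G$ for the coalition inside the simulation. Once this calibration is in place, every adversary strategy in $G$ that avoids $T$ translates into a player-$0$ deviation in $\gamma$ that defeats every candidate deviator-detection profile, and the undecidability of coordination synthesis transfers directly to deviator detection under imperfect monitoring.
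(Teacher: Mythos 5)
Your overall strategy---reduce from the undecidable coordination problem of Theorem~\ref{thm:coordinationundecidabilty} by having the adversary's deviation launch a simulation of $G$ and tying common knowledge of the deviator's identity to reaching the target---is the right instinct, but the step you yourself flag as ``the main obstacle'' is not a calibration detail; it is where the three-player construction breaks. With a single candidate hidden deviator (player~$0$), the moment the coalition can tell it is in the simulation phase it already follows, for every history reachable along any $\sim^1/\sim^2$-chain, that the deviation was player~$0$'s; common knowledge of $D^0(\pi)$ is then attained immediately upon entering the simulation, and the hard direction of the reduction collapses. To prevent this you must confuse pre-$T$ simulation histories with histories \emph{outside} $D^0(\pi)$, and only two kinds are available: histories of the agreed play~$\pi$ itself, and histories where player~$1$ or~$2$ deviated. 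The first option contradicts your stipulation that $s^1,s^2$ switch to the $G$-coalition strategy ``as soon as they see any simulation-phase observation''---they could not see it. The second, which you actually propose, makes a genuine deviation by player~$2$ indistinguishable, for player~$1$, from a player-$0$ deviation; player~$1$ then always considers possible a history outside $D^2(\pi)$, common knowledge of $D^2(\pi)$ is never attained, your requirement (a) fails, and the candidate profile is not a deviator-detection strategy at all.

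The paper escapes this bind by duplicating the adversary rather than overloading the coalition's observations: it uses four players $1,2,X,Y$, two disjoint copies $\gamma_X,\gamma_Y$ of the coordination game (with $Y$'s actions ignored in the former and $X$'s in the latter), identical fresh observations delivered to $1$ and $2$ upon a deviation by either $X$ or $Y$, and a revealing observation emitted to all players only when the play reaches the target set~$F$ in the respective copy. Deviations by $1$ and $2$ stay perfectly observable, so no blurring is needed there; the coalition knows it is in a simulation and can run its $G$-strategy; yet the \emph{identity} of the deviator---$X$ versus $Y$---becomes common knowledge exactly when $F$ is reached. The ambiguity your reduction needs must live between two external adversaries, not between the adversary and a coalition member; with that change the rest of your argument goes through.
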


\begin{proof}
Consider an arbitrary coordination game~$\gamma$ with three players~$0$, $1$, and~$2$ 
where the coalition $\{1, 2\}$ seeks to reach a set $F$ of states under imperfect monitoring. 
We reduce the synthesis problem for this game to one in a deviator detection game~$\gamma'$ among four 
players $1$, $2$, $X$ and $Y$, in which $1,2$ play the same role as in $\gamma$ 
whereas $X$ and $Y$ both take the role of player~$0$. The new game contains two disjoint copies 
$\gamma_X$, $\gamma_Y$ of $\gamma$; the actions of $Y$ are ignored in the former, 
and those of $X$ in the latter. The game~$\gamma'$ starts in a fresh state 
at which it loops with a fixed action profile; 
the designated set~$W$ consists only of this looping play.
The actions of player~$1$ and $2$ at this state are perfectly observable to all players, 
so any deviation from the loop in~$W$ is detected instantly.
In contrast, the deviations of player~$X$ or $Y$ generate the same (fresh) observation to $1$ and $2$, 
and they lead to the initial state of $\gamma_X$ and $\gamma_Y$, respectively. 
These two component games evolve in the same way with the only difference that,
 when switching to a target state of $F$ in $\gamma_X$, 
the observation $X$ 
is sent to all players, whereas the observation~$Y$ is sent when reaching~$F$ in $\gamma_Y$. 

Thus, for any deviator detection strategy~$s$, upon deviation of either $X$ or $Y$ from the initial loop, 
players~$1$ and $2$ 
must coordinate to reach the target set~$F$ to identify the deviator. 
Hence, $(s^1, s^2)$ yields a solution of the coordination problem. 
Conversely, any solution of the coordination problem leads to a state in~$F$ where the deviator is revealed, so it provides a deviator detection strategy. Since the synthesis problem for coordination problems is undecidable, according to Theorem~\ref{thm:coordinationundecidabilty}, 
it follows that the synthesis problem for deviator detection strategies is also undecidable.  
\qed 
\end{proof}

\section{Perfect monitoring of states}\label{Sec:Main}

As we could see in the previous section, the algorithmic intractability of
games where the global state can be hidden from the players 
over an unbounded duration of time is preserved when we move from 
coordination to the deviator detection problem. 
However, our setting bears two sources of uncertainty: the global state \emph{and} 
the played action. 
In this section we consider the case where the uncertainty comes only 
from the actions played by the other players. 
Indeed, this is a generalisation of the setting of infinitely repeated games, 
which can be seen as games with only one global state.


As an example, consider the following simple variant of a 
beeping model~\cite{CornejoKuhn2010}. There are $n$ nodes in a network 
represented by an undirected graph. The nodes can communicate synchrounously. 
In every round, a node can either beep or stay silent. 
A silent node can observe 
whether at least one of its neighbour beeped. 
We assume that the network is commonly known, we are interested in 
distributed protocols under wich some temporal condition is ensured,  e.g., 
no more than a quarter of the nodes beep in the same round, and that 
are additionally deviator proof, in the sense that   
whenever a node deviates from 
the protocol, the protocol followed by the remaining nodes allows to 
reach a consensus on the identity of the deviator. 
This question can be  
represented as a deviator-detection problem among $n$ 
players, each with two actions -- beep or stay silent -- and 
two observations, telling whether any neighbour beeped or not in the 
current round. As the effect of an action profile is the same in any round, 
the game has only one global state. 
Still, the synthesis problem shows some complexity.
Partly, this is due to the structure of the observation functions 
encoded by the network graph. For instance, one can observe 
that no deviator detection-strategy can exists on networks that are 
not two-connected: Any deviation has to be detected by at least 
two witnesses, and every node that is not a direct witness needs to be 
finally informed via at least two disjoint paths. 
But the greater challenge comes from 
the dynamics of communicating the identity of the deviator: 
In contrast to the more traditional synthesis problems for temporal 
conditions, whether a play~$\pi$ is successful is not determined by 
the strategic choices taken along~$\pi$ itself, but also depends on 
the choices taken on histories connected to~$\pi$ via the player's 
indistinguishability relations. 
 
Concretely, we consider games that allow \emph{perfect monitoring of the state} 
in the sense that 
for every observation sequence $\beta^i := b_1^i, \dots, b_t^i$ received by any player~$i$ 
along a history, there exists precisely one global state $v$ 
that is reachable by a history with observation
$\beta^i$. In other words, all histories in an information set of a player
end at the same global state. 
The condition is obviously met if we include the current global state in the observation 
of each player. Indeed, every finite game with perfect monitoring of the state 
can be transformed effectively into one where all the players can observe 
the current state.   

Our main technical result establishes that,  
under perfect monitoring of states,
the deviator detection problem is algorithmically tractable in spite of imperfect private monitoring 
of actions.
 
\begin{theorem}\label{thm:main}
The synthesis problem for deviator-detection strategies is effectively solvable for
games with perfect monitoring of the state and imperfect monitoring of actions. 
\end{theorem}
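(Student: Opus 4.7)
The plan is to exploit the monotone structure of uncertainty under perfect monitoring of states to reduce the problem to a decidable synthesis problem on a finite abstract game. I would first invoke the preceding Lemma to translate $\gamma$ into the exposure game $\hat\gamma$ with winning condition $\hat W$, so that it suffices to synthesise a coordination strategy profile for the coalition $I$ against Nature.

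The central notion is the \emph{suspect set}: for each history $\tau$ of $\hat\gamma$ and each player $j$, let $S^j(\tau) \subseteq I \cup \{\varepsilon\}$ collect the identities of (possibly corrupted or silent) Nature moves consistent with $j$'s observation sequence $\beta^j(\tau)$. Since the state is perfectly monitored, $S^j(\tau)$ is a well-defined function of $\beta^j(\tau)$, and since every new observation can only rule out suspects, $S^j$ is monotonically non-increasing along prolongations. I would then define an abstract game $\hat\gamma^\#$ whose states are pairs $(v, (S^j)_{j \in I})$ of a global state and a suspect-set profile, updating each $S^j$ locally according to the relation $\sim^j$ at every transition. Because $V$ is finite and there are finitely many suspect-set profiles, $\hat\gamma^\#$ has a finite state space; moreover, each player's information in $\hat\gamma^\#$ is captured by the finite pair $(v, S^j)$, so a strategy for $j$ can be represented as a function on this finite set.

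To handle the common-knowledge objective I would use the standard graph characterisation: common knowledge of $D^i(\pi)$ holds at an abstract state $\sigma$ iff every abstract state in the connected component of $\sigma$, under the union of the lifted indistinguishability relations $\approx^j$ defined by equality of $v$ and of $S^j$, satisfies $S^j = \{i\}$ for every $j \neq i$. Connected components on a finite graph are computable, so this defines a finite set of ``detected'' abstract states. The winning condition $\hat W^\#$ then combines a safety condition along the target play (while Nature is silent) with a reachability condition on detected states after any Nature-deviation. Since the set of finite-memory strategy profiles that respect the information structure $(v, S^j)$ is finite, the resulting synthesis problem is decidable by exhaustive search over profiles, and any winning profile yields the desired finite-state deviator-detection strategy in $\gamma$.

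The main obstacle will be to justify the abstraction formally: one must show that whenever a perfect-recall deviator-detection strategy exists in $\gamma$, there is also one whose moves for each player $j$ depend only on the pair $(v, S^j)$. This reduction relies on the monotonicity of suspect sets, which supplies a well-founded induction along the shrinking chain of suspect-set profiles and allows history-dependent behaviour to be compressed into a finite-memory strategy without compromising common knowledge of the deviator's identity.
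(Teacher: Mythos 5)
Your overall strategy---pass to the exposure game, track each player's residual uncertainty about Nature's choice in a finite structure, and solve the resulting finite perfect-information game---is essentially the route the paper takes. The paper factors the argument through an intermediate result on \emph{revelation games} (a sum of finitely many components with perfect state monitoring, where Nature's only hidden choice is the initial component), handles the fact that Nature also chooses \emph{when} to deviate by making Nature commit to a candidate deviator in its first move, and then tracks knowledge about the component index by a finite-state epistemic construction. So the skeleton is right, but two of your steps have genuine gaps.

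First, your characterisation of common knowledge is not correct as stated. You lift indistinguishability to abstract states by declaring $\sigma \approx^j \sigma'$ whenever the global state and the suspect set $S^j$ coincide. Since $\tau \sim^j \tau'$ implies $\beta^j(\tau)=\beta^j(\tau')$, which implies equal $(v,S^j)$, but not conversely, $\approx^j$ strictly over-approximates $\sim^j$; taking connected components under $\bigcup_j \approx^j$ therefore yields only a \emph{sufficient} condition for common knowledge, and a procedure built on it may declare detection impossible when it is achievable---completeness fails. This is exactly why the paper tracks a richer object: a Kripke structure over the candidate-deviator indices whose accessibility relations record which components each player still confuses \emph{given the actual history}; the profile of individual suspect sets discards the relational, higher-order information needed to decide common knowledge exactly. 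Second, the step you yourself flag as ``the main obstacle''---that a winning perfect-recall profile can be compressed to one measurable in $(v,S^j)$---is the technical heart of the theorem, and monotonicity of suspect sets alone does not deliver it; moreover, strategies cannot in general be memoryless in $(v,S^j)$ because $W$ is an arbitrary regular set of plays (one needs at least the automaton state for $W$, and even the paper claims only finite-state strategies over \emph{sequences} of abstract states, not positional ones). As it stands, your exhaustive search ranges over a strategy class that is not shown to be complete, so the argument establishes soundness of the candidates it finds but not decidability of the synthesis problem.
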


The proof relies on a more general result which states, informally, 
that the synthesis problem for coordination games with a bounded amount 
of uncertainty are algorithmically tractable.
To formulate this more precisely, let us fix a set~$I$ of players 
with their sets of actions, observations, and local states; 
the set includes a designated player~Nature.
Consider a finite collection $\gamma_1, \dots, \gamma_n$ of games with 
perfect monitoring of the state over the fixed action, 
observation and state spaces, together with a winning condition~$W$ 
common to all these games. 
We define the \emph{sum game} $\gamma$ over the collection as a game with
a fresh initial node at which  Nature 
chooses the initial node of any of the games in the collection; no information about this
move is delivered to the other players in~$I \setminus \{ \mathrm{Nature} \}$. 
Note that the sum~$\gamma$
may not allow perfect monitoring of the state. 
For this sum game~$\gamma$, we consider the task to synthesise a coordination strategy for the 
coalition~$I \setminus\{ \mathrm{Nature} \}$ to ensure either that the outcome is 
either winning with respect to $W$ or it \emph{reveals} the initial choice of Nature. 
That is, we require, for every play~$\pi \not \in W$ 
which follows the strategy, that at some history in~$\pi$, the players attain 
common knowledge about the component game that has been chosen. 
We call this a revelation game over $\gamma_1, \dots, \gamma_n$ with condition~$W$.

In game-theoretic terminology, the sum game constructed above is actually a game of 
\emph{incomplete} information \,---\, the uncertainty about the global state of the game is 
due to not knowing  which of the finitely many component games is being played. 
Nevertheless, as the component games deliver different observations, 
the players may be able to recover this missing information. It turns out that this
restricted form of imperfect information is algorithmically tractable. 
The setting is similar to that of multi-environment Markov decision processes 
studied in \cite{RaskinSan14}. 

\begin{theorem}\label{thm:revelation}
The synthesis problem is effectively solvable for revelation games on  
components with perfect monitoring of the state.
Moreover, the set of all 
winning strategies admits a regular representation. 
\end{theorem}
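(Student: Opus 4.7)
The plan is to reduce the revelation game on $\gamma_1, \dots, \gamma_n$ to a synthesis problem on a finite game of perfect information, on which standard techniques for $\omega$-regular objectives apply. The decisive feature is that each component $\gamma_j$ has perfect monitoring of state, so any observation sequence $\beta^i$ of a player $i$ is compatible with at most one history per component. Consequently, Nature's initial choice is the only source of uncertainty, and the epistemic state of each player reduces to the subset of $\{1, \dots, n\}$ still consistent with his observations.

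First, I would fix a candidate strategy profile $s$ for the coalition $I \setminus \{\mathrm{Nature}\}$, assumed commonly known. Since Nature acts only at the root, $s$ induces exactly $n$ outcome plays $\pi_1, \dots, \pi_n$, one per component. For each stage $t$, I define the undirected graph $G_t$ on $\{1, \dots, n\}$ with an edge $\{j, k\}$ iff some player $i$ satisfies $\beta^i(\pi_j|_t) = \beta^i(\pi_k|_t)$. Since observation histories grow monotonically and each $\beta^i$ records the per-component state, $G_t$ can only lose edges as $t$ increases. Unfolding common knowledge over chains of epistemic relations on the plays consistent with $s$ then shows that the coalition attains common knowledge at $\pi_j|_t$ that Nature chose $j$ precisely when the connected component of $G_t$ containing $j$ is the singleton $\{j\}$.

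Next, I would construct a finite game of perfect information $\Gamma$ whose states pair a tuple $(v_1, \dots, v_n) \in V^n$ of simultaneous per-component states with a partition $P^i$ on $\{1, \dots, n\}$ for each player $i$, recording which components still yield coinciding observation traces. A move of the coalition at such a state is an assignment of actions $(c^i_j)_{i,j}$ with $c^i_j = c^i_k$ whenever $j$ and $k$ lie in the same class of $P^i$; this enforces that player $i$'s choice depends only on his observation history. Transitions apply these actions synchronously in every component track and refine each partition according to the new per-component observations. The revelation requirement lifts to an $\omega$-regular objective: on each component track $j$, require that either the projected play belongs to $W$, or eventually the singleton $\{j\}$ appears as an isolated class in the graph derived from the current partitions. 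Classical synthesis for $\omega$-regular conditions on finite perfect-information games then solves the problem and yields a finite-automaton representation of the set of winning strategies, giving the regular representation claimed.

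The main technical obstacle lies in the passage from common knowledge\,---\,a priori an infinitary quantification over chains of indistinguishability relations\,---\,to the finite, monotonic graph invariant $G_t$. The crucial ingredient is that, under perfect monitoring of state within each component and with the coalition's strategy commonly known and followed, the epistemic class at $\pi_j|_t$ effectively restricts to a subset of the $n$ outcome plays. A secondary hurdle is engineering the action sets in $\Gamma$ so that a single coalition move simulates observation-dependent strategies coherently across all component tracks; this is handled by the partition-consistency constraint described above, after which the remainder is a routine application of standard synthesis tools.
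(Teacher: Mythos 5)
Your proposal follows essentially the same route as the paper: you track, for each player, the set of component indices still consistent with his observations (your partitions $P^i$ are exactly the paper's ``epistemic structures over component indices'' with relations $\sim^i$), reduce to a finite perfect-information game over tuples of per-component states paired with these epistemic states, and declare the component revealed once it becomes a singleton class of the graph generated by the players' relations\,---\,matching the paper's winning condition of satisfying $W$ or reaching a singleton epistemic structure, and yielding the regular representation of winning strategies in the same way. The only point worth noting is that your explicit restriction of the common-knowledge analysis to the $n$ strategy-consistent outcome plays is also the reading the paper's (sketched) construction implicitly relies on.
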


The idea is to keep track of the knowledge 
that players have about the index of the actual component game 
while the play proceeds. 
This knowledge can be represented by epistemic 
structures similar to the ones used in~\cite{BKP11}.
Here it is sufficient to consider epistemic structures on a subset
of component indices, with epistemic equivalence relations~$\sim^i$ 
relating indices $k$, $k'$ whenever player~$i$ considers it possible to be in
component~$k$ if the actual history is in component~$k'$. 
The tracking construction associates to each history~$\pi$
a structure that is strongly connected 
via these relations; we call this structure the epistemic state of $\pi$.  

Intuitively, the construction represents 
the actions in the original game abstractly by their 
effect on the uncertainly about the component. 
In contrast to the concrete actions in the game, which can be monitored only imperfectly, 
the abstract updates on the epistemic structure can be 
monitored perfectly; the resulting game is thus solvable with standard 
methods as one with perfect information, 
where the winning condition asks to satisfy~$W$ or to reach an 
epistemic structure over a singleton, representing that the 
players attain common knowledge about the actual component.  
The perfect-information solution 
yields a regular representation of the set of winning strategies 
over a product alphabet of global game states and epistemic states. 
In the full paper, we
show that this abstraction can be maintained by a
finite-state construction and that it allows to represent a solution 
whenever one exists. 

To prove Theorem~\ref{thm:main} using the result of Theorem~\ref{thm:revelation}, 
we view the deviator detection problem as a revelation game. Towards this, we adapt
the exposure game constructed in
Subsection~\ref{ssec:detection-coordination} for a given deviator detection problem~$\gamma$ with target set~$W$.  
The exposure game $(\hat{\gamma}, \hat{W})$ 
is already close to the setting of revelation games, 
but there is one twist: 
Besides choosing the deviator, in the exposure game 
Nature can also choose the period in which to deviate. 
To account for this, we transform~$\hat{\gamma}$ by letting 
Nature pick a candidate deviator $i \in I$ in the first move; 
this choice is hidden from the other players. 
In every later round, Nature can choose to either remain silent or 
to corrupt the action of player~$i$. As a target condition for the new game, 
we fix the set of all plays in the target set~$W$ where Nature remained silent. 
The obtained revelation game has the same set of solutions as the 
deviator detection problem at the outset.  

\section{Conclusion}

Thus what we have here is a building block for constructing equilibria in games based on 
epistemic states. We are interested primarily in the issue of detecting a 
deviation from an agreed strategy profile. This task is more specific than 
constructing equilibria by detecting deviations 
from the set of distributed strategies that ensure a win, as done 
for instance, in~\cite{BouyerBMU12}. The crucial difference relies in the fact 
that, in the latter case, the deviation events can be described by a 
(regular) set of game histories, while in our setting, the notion of deviation 
is relative to a strategy profile that is not fixed within the game structure. 
To illustrate the difference, 
consider the example of a beeping model from~\ref{Sec:Main} 
with the trival target set that contains all possible plays. 
Obviously, every strategy profile is an equilibrium here, but 
deviator-detection strategies remain nevertheless intricate.

Our abstraction from imperfect monitoring of actions to games with perfect
information is fairly generic. 
The central clue is that if there is only a 
finite amount of information hidden in the beginning of a play, and 
we can decide whether the 
coalition can recover it. 
In the context of distributed systems, there is a wide variety
of situations that involve only imperfect observation of actions, and where uncertainty
about system states may be bounded. Hence we reasonably expect that these techniques
will be applicable, not only for deviator detection, but in other algorithmic questions
on inferring global information in such systems.

\bigskip
\noindent\paragraph{Acknowledgements} 
This work was supported by the 
Indo-French Joint Research Unit \textsc{ReLaX}
(\textsc{umi} \textsc{cnrs} 2000).

\bibliographystyle{amsplain}
\bibliography{global}

\end{document}